\documentclass[11pt,a4paper]{article}
\usepackage[margin=2.5cm]{geometry}
\usepackage[T1]{fontenc}
\usepackage{graphicx} 
\usepackage{xcolor}
\usepackage[ruled,algo2e]{algorithm2e}
\usepackage{algorithm}
\usepackage{algorithmic}
\SetKwInput{KwOutput}{Output}
\usepackage{amssymb} 
\usepackage{amsthm}
\usepackage{amsmath} 

\usepackage{bookmark}
\usepackage{thm-restate}

\newtheorem{theorem}{Theorem}
\newtheorem{lemma}{Lemma}

\date{}

\title{\Large Semi-Robust Communication Complexity of Maximum Matching}
    \author{Gabriel Cipriani Huete\footnotemark[2]
    \and Adithya Diddapur\footnotemark[2]
    \and Pavel Dvo\v{r}\'{a}k\thanks{\texttt{koblich@iuuk.mff.cuni.cz}, Charles University, Prague, Czech Republic. Supported by Czech Science Foundation GAČR grant \#22-14872O.}    
    \and Christian Konrad\thanks{\texttt{\{gabriel.cipriani,adi.diddapur,christian.konrad\}@bristol.ac.uk}, School of Computer Science, University of Bristol, Bristol, UK.}}

\DeclareMathOperator{\poly}{poly}
\DeclareMathOperator{\Exp}{\mathbb{E}}

\setlength{\parskip}{1pt}

\begin{document}

\maketitle
\begin{abstract}
We study the one-way two-party communication complexity of \textsf{Maximum Matching} in the {\em semi-robust setting} where the edges of a maximum matching are randomly partitioned between Alice and Bob, but all remaining edges of the input graph are adversarially partitioned between the two parties. 

\vspace{0.1cm}

We show that the simple protocol where Alice solely communicates a lexicographically-first maximum matching of their edges to Bob is surprisingly powerful: We prove that it yields a $3/4$-approximation in expectation and that our analysis is tight.

\vspace{0.1cm}

The semi-robust setting is at least as hard as the {\em fully robust} setting. In this setting, all edges of the input graph are randomly partitioned between Alice and Bob, and the state-of-the-art result is a fairly involved $5/6$-approximation protocol that is based on the computation of edge-degree constrained subgraphs [Azarmehr, Behnezhad, ICALP'23]. Our protocol also immediately yields a $3/4$-approximation in the fully robust setting. One may wonder whether an improved analysis of our protocol in the fully robust setting is possible: While we cannot rule this out, we give an instance where our protocol only achieves a $0.832 < 5/6 = 0.8\overline{3}$ approximation. Hence, while our simple protocol performs surprisingly well, it cannot be used to improve over the state-of-the-art in the fully robust setting.

\end{abstract}

\section{Introduction}
\paragraph{One-way Two-Party Communication.} In the {\em one-way two-party communication} setting \cite{y79}, two parties, denoted Alice and Bob, each hold a portion of the input. Based on their input, Alice sends a message to Bob, who, upon receipt, computes the output of the protocol as a function of his input and the message received. Protocols can also be randomized, in which case Alice and Bob additionally have access to both private and shared infinite sequences of random bits. 
The goal is to design protocols with small {\em communication cost}, i.e., protocols that communicate as few bits as possible.

The one-way two-party communication setting is a clean and interesting model that deserves attention in its own right.
However, due to its connection to streaming algorithms, it has received an additional significant boost in interest in the last decade or so. We will elaborate more on the connection between one-way communication and streaming in Subsection~\ref{sec:connection-streaming}. %

In this work, we study the \textsf{Maximum Matching} problem in the one-way two-party communication setting. Given a graph $G=(V, E)$ with $|V| = n$, a {\em matching} $M \subseteq E$ is a subset of vertex-disjoint edges, and a {\em maximum matching} $OPT$ is one of largest cardinality. In the one-way two-party setting, Alice and Bob each hold subsets of the edges $E_A, E_B \subseteq E$ of the input graph, respectively, and the objective for Bob is to output an $\alpha$-approximation to \textsf{Maximum Matching}, i.e., a matching $M$ such that $|M| \ge \alpha \cdot |OPT|$, for some $0 < \alpha \le 1$, where $OPT$ denotes a maximum matching in the input graph $G=(V, E_A \cup E_B)$.
  
It is known that computing an exact maximum matching requires Alice to send a message of size $\Omega(n^2)$ \cite{fkmsz05}. Goel et al. \cite{gkk12} were the first to consider approximations in this setting and gave a communication protocol with approximation factor $2/3$ that communicates $O(n \log n)$ bits. They also proved that a message of size $n^{1 + \Omega\left(\frac{1}{\log \log n}\right)}$ is necessary for going beyond such an approximation factor. 

\paragraph{The Robust Setting.} Assadi and Behnezhad \cite{ab21a} initiated the study of \textsf{Maximum Matching} in the {\em robust} one-way two-party communication setting \cite{ccm16}. In the robust setting, each edge of the input graph is randomly assigned to either Alice or Bob, each with probability $1/2$. They proved in a non-constructive way that there exists a protocol for bipartite graphs that achieves a $0.716$-approximation in expectation (over the random edge partitioning). Azarmehr and Behnezhad \cite{ab23} subsequently gave a protocol for general graphs that achieves a $(5/6-\epsilon)$-approximation, for any $\epsilon > 0$, which constitutes the state-of-the-art bound. Regarding lower bounds, it is known that computing a $(1-\epsilon)$-approximation requires a message of size $\Omega(n \poly \log(n) \cdot \exp(\frac{1}{\epsilon^{0.99}}))$ \cite{ab21}, but, for example, it has not yet been ruled out that a $0.999$-approximation with a message of size $O(n \poly \log n)$ exists. 

Azarmehr and Behnezhad's protocol is an implementation of the $(2/3-\epsilon)$-approximation random order one-pass streaming algorithm by Bernstein \cite{b20} in the robust one-way two-party communication setting. The communication setting then allows for an improved analysis in order to boost the approximation factor  from $2/3-\epsilon$ to $5/6 - \epsilon$. Bernstein's algorithm is based on the computation of edge-degree constrained subgraphs (EDCS) \cite{bs15}, which are known to do well in various resource-constraint models (see, for example,  \cite{ab15,abbms19,ab19,ab21,bk22,b24}).  

The EDCS technique is arguably fairly involved, and our aim is to understand whether such heavy machinery is necessary to obtain well-performing protocols in the robust setting. In fact, it is not even clear how well the perhaps most natural protocol for this problem where Alice simply sends a largest possible matching among their input edges to Bob performs. %
It is easy to see that this protocol immediately yields a $1/2$-approximation, however, it is unclear  whether it performs any better.

\paragraph{The Semi-Robust Setting.} In this work, we study this simple protocol, i.e., Alice sends a (lexicographically-first) maximum matching of their edges to Bob, in a slightly more challenging variant of the robust setting that we refer to as the {\em semi-robust setting}. Let $G=(V, E)$ be the input graph and let $OPT$ be an arbitrary maximum matching in $G$. Then, in the semi-robust setting, the edges of $E \setminus OPT$ are arbitrarily, potentially adversarially, partitioned between Alice and Bob, and the edges of $OPT$ are partitioned uniformly and independently at random between Alice and Bob. The semi-robust setting is at least as hard as the fully robust setting since any protocol designed for the semi-robust setting also constitutes a protocol with the same guarantees in the fully robust setting.

A further motivation for studying this setting is with regards to the $2/3$-approximation barrier for semi-streaming protocols under fully adversarial partitions\cite{gkk12}.
The previously mentioned work in the fully robust setting shows that this barrier can be broken under a fully random partition of the input.
The semi-robust setting then allows us to ask a narrower question - how brittle is the $2/3$-approximation barrier against input partitions which are only partly adversarial?

\subsection{Our Results}
In the following, we denote by $\Pi_{\text{LFMM}}$ the protocol where Alice sends a lexicographically-first maximum matching of their edges to Bob, see Algorithm~\ref{alg:pilfmm}.
Note that we allow the (lexicographic) ordering on vertices to be adversarially chosen, and the only randomness considered later is over the set of possible partitions.

\begin{algorithm}
 \begin{algorithmic}
  \REQUIRE Input graph $G=(V, E)$, Alice holds $E_A \subseteq E$ and Bob holds $E_B \subseteq E$ \vspace{0.2cm}
  \STATE 
  \hspace{-0.4cm} \begin{tabular}{ll}
\textbf{Alice:} &   Compute lexicographically-first maximum matching $M$ of $(V, E_A)$, send $M$ to Bob \\
\textbf{Bob:}  & Output a maximum matching in $(V, E_B \cup M)$
  \end{tabular}
 \end{algorithmic}
 \caption{Protocol $\Pi_{\text{LFMM}}$ \label{alg:pilfmm}}
\end{algorithm}
As our main result, we show that $\Pi_{\text{LFMM}}$ is surprisingly strong and yields an approximation factor of $3/4$ in expectation in the semi-robust setting.

\begin{theorem}
 In the semi-robust setting, the protocol $\Pi_{\text{LFMM}}$ constitutes a one-way two-party communication protocol that achieves an approximation factor of $3/4$ in expectation.
\end{theorem}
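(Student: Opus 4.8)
The plan is to fix everything the adversary controls and then track how Alice's maximum matching interacts with the randomly split edges of $\OPT$. So fix $G=(V,E)$, the maximum matching $\OPT$, and the adversarial partition $E\setminus\OPT=F_A\cup F_B$; the only randomness is the independent, uniform assignment of each edge of $\OPT$ to Alice or to Bob, giving $\OPT_A,\OPT_B$ with $E_A=F_A\cup\OPT_A$ and $E_B=F_B\cup\OPT_B$. Let $M$ be the lexicographically-first maximum matching of $E_A$; what matters is only that $M$ is \emph{some} maximum matching of $E_A$, since its choice is adversarial. Because $\OPT_B\cup M\subseteq E_B\cup M$, Bob's output satisfies $|\OUT|\ge\mu(M\cup\OPT_B)$ (with $\mu(\cdot)$ the maximum matching size), so it suffices to prove $\E[\mu(M\cup\OPT_B)]\ge\tfrac34|\OPT|$. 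Two consequences of $M$ being a maximum (hence maximal) matching of $E_A$ will be used repeatedly: $|M|\ge|\OPT_A|$; and every edge of $\OPT_A$ has an endpoint in $V(M)$, where, if such an edge $e=uv$ is not itself in $M$, the $M$-edge covering $u$ (or $v$) must lie in $F_A$, since the only $\OPT$-edge at $u$ is $e$.

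The graph $M\cup\OPT_B$ is a union of two matchings, hence a vertex-disjoint union of paths and even cycles alternating between $M$-edges and $\OPT_B$-edges. Let $N$ be a maximum matching of it; an elementary computation gives $|N|=|\OPT_B|+p$, where $p$ is the number of path components with exactly one more $M$-edge than $\OPT_B$-edge. We charge the $|N|$ edges of $N$ to edges of $\OPT$: an edge of $N$ that is itself an $\OPT$-edge is charged to itself; an edge $g\in N$ with $g\notin\OPT$ is an adversarial $M$-edge, and since $\OPT$ is a maximum matching of $G$ every adversarial edge meets $V(\OPT)$, so at least one endpoint $x$ of $g$ is covered by $\OPT$ and we charge $g$ to the $\OPT$-edge at $x$ (breaking ties by a rule fixed in the analysis). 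Writing $Y_e\ge0$ for the total charge received by $e\in\OPT$, we get $\sum_{e\in\OPT}Y_e=|N|=\mu(M\cup\OPT_B)\le|\OUT|$, so it is enough to prove $\E[Y_e]\ge\tfrac34$ for every $e\in\OPT$.

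Fix $e=uv\in\OPT$. With probability $\tfrac12$ we have $e\in\OPT_B$; then $e$ is an edge of $M\cup\OPT_B$, so maximality of $N$ matches at least one endpoint of $e$, and the matching edge there is either $e$ itself or an adversarial $M$-edge that (under a suitable tie-break) is charged to $e$, whence $Y_e\ge1$. It remains to show $\E[Y_e\mid e\in\OPT_A]\ge\tfrac12$. If $e\in\OPT_A$ and $e\in M$, then $e$ is an isolated edge of $M\cup\OPT_B$ (its endpoints carry no $\OPT_B$-edge and, in $M$, only $e$), so $e\in N$ and $Y_e\ge1$; if instead $e\notin M$, maximality forces some endpoint, say $u$, to be covered in $M$ by an adversarial edge $g=ux$, and $g$ can be charged to $e$ precisely when $g$ avoids $V(\OPT_B)$, i.e.\ when $x\notin V(\OPT)$ or $x$'s $\OPT$-partner also went to Alice. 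The heart of the proof — and the main obstacle — is to show that, averaged over the split and over the adversarial choice of $M$, the event $e\in M$ together with the appearance of such substitute edges contributes at least $\tfrac12$; the difficulty is that "$e\notin M$'' and the locations of the substitute edges are both determined by Alice's maximum matching and hence depend on overlapping coin flips, so one cannot simply multiply probabilities. The intended resolution exposes the assignment of $e$ last, analyses how the maximum matching of $E_A$ reacts when $e$ is added to Alice, and charges to $e$ either $e$ itself or the adversarial edge(s) that $M$ uses in place of $e$, fixing the tie-breaking rule so that these charges are not stolen by neighbouring $\OPT$-edges; this gives $\E[Y_e\mid e\in\OPT_A]\ge\tfrac12$ and hence $\E[Y_e]\ge\tfrac34$. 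This balancing step is exactly where the worst-case choice of $M$, the uniform random split of $\OPT$, and the maximality of both $M$ and $\OPT$ interact to pin down the constant $3/4$, matching the long-path instances whose odd-indexed edges form $\OPT$.
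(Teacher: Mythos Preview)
Your setup is reasonable, but the proposal has a genuine gap at precisely the point you yourself flag as ``the heart of the proof.'' You never prove that $\E[Y_e\mid e\in\OPT_A]\ge\tfrac12$; you only announce an ``intended resolution'' (expose $e$ last, track how $M$ reacts, fix tie-breaks so charges are not stolen). That is a plan, not an argument, and the obstacles you name are real. The tie-break that guarantees $Y_e\ge1$ for every $e\in\OPT_B$ --- charge an adjacent $N$-edge to $e$ --- is exactly the rule that may divert the substitute $M$-edge away from a neighbouring $e'\in\OPT_A$. Concretely, if $e'\in\OPT_A$, $e'\notin M$, and the adversarial $M$-edge $g$ covering an endpoint of $e'$ has its other endpoint on some $e\in\OPT_B$, your $\OPT_B$-rule sends $g$'s charge to $e$, leaving $e'$ with nothing from that side. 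On the tight length-$3$ path (middle edge at Alice, vertex order chosen so Alice's LFMM prefers the middle edge), the bound $\E[Y_e\mid e\in\OPT_A]\ge\tfrac12$ must hold with equality for both $\OPT$-edges simultaneously, so there is zero slack: any uncontrolled leakage in the tie-break destroys the bound. You also do not verify that a tie-break making $Y_e\ge1$ for all $e\in\OPT_B$ even exists globally (it does, via a component-wise injection, but this needs to be said).

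The paper's proof avoids per-edge charging entirely. It works with $H=M\cup\OPT$ (not $M\cup\OPT_B$), conditions on the whole component decomposition $\cC$ of $H$, and proves a structural lemma: on any augmenting path $P$ of length $2k-1$ in $H$, the family of possible sets $\OPT_B\cap P$ is closed under taking supersets (because moving an $\OPT$-edge on $P$ from Alice to Bob leaves the LFMM unchanged). This monotonicity forces $\Pr[\OPT_B\supseteq\OPT\cap P\mid\cC]\ge 1/(2^k-1)$ and $\E[|\OPT_B\cap P|\mid\cC]\ge k\,2^{k-1}/(2^k-1)$. The first inequality lower-bounds $\E|\OUT|$ as a linear function of the expected counts $n_i$ of $(2i-1)$-augmenting paths; the second, combined with $\E|\OPT_B|=\tfrac12|\OPT|$, gives a linear constraint on the $n_i$. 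Solving the resulting LP yields $3/4$. The essential difference is that the paper bounds the total output globally, never attributing individual $N$-edges to individual $\OPT$-edges, and thereby sidesteps the tie-breaking conflicts your local scheme runs into.
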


We note that our analysis is tight, in that there are simple graphs $G=(V, E)$, e.g., the path on three edges, together with an adversarial partitioning of $E \setminus OPT$ between Alice and Bob so that $\Pi_{\text{LFMM}}$ does not yield an expected approximation factor above $3/4$. 

We note that our analysis can easily be adapted to the protocol where Alice sends a lexicographically-first {\em maximal matching} to Bob. This protocol achieves an approximation factor of $5/8$ in expectation, which is also tight.

One may wonder whether a tailored analysis of the $\Pi_{\text{LFMM}}$ protocol to the fully robust setting allows for an improved approximation factor. While we cannot rule this out, via an exhaustive search using a computer, we identified an example graph together with an adversarial vertex ordering such that $\Pi_{\text{LFMM}}$ yields an expected approximation factor of $0.832 < 5/6 = 0.8\overline{3}$ in the fully robust setting (see Figure~\ref{fig:hard-example} for details). This example demonstrates that the $\Pi_{\text{LFMM}}$ protocol cannot achieve the guarantees established by Azarmehr and Behnezhad \cite{ab23} who used the more involved EDCS technique.

\begin{figure}
\begin{center} 
\includegraphics[height=2.4cm]{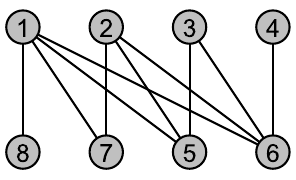} 
\end{center}
 \caption{The hard input graph for $\Pi_{\text{LFMM}}$ in the fully robust setting is the semi-complete graph on $8$ vertices. Importantly, a specific vertex ordering is required that defines the lexicographical ordering of the maximum matchings computed by Alice (see the preliminaries for details). The ordering is indicated by the integers assigned to the vertices. We observe that the positions of $5$ and $6$ are required to be as they are -- the more intuitive ordering with $5$ and $6$ exchanged does not yield an expected approximation factor $< 5/6$. \label{fig:hard-example}}  
\end{figure}

\subsection{Our Techniques}
We will now discuss the main ideas underlying our analysis of the $\Pi_{\text{LFMM}}$ protocol in the semi-robust setting. To this end, we denote by $M$ the lexicographically-first maximum matching of the edges $E_A$ sent from Alice to Bob, and by $OPT_A$ and $OPT_B$ Alice's and Bob's parts of the randomly partitioned edges of $OPT$, respectively. 

In our analysis, we consider the graphs $H = (V, M \cup OPT)$ and $H' = (V, M \cup OPT_B)$ and prove that a maximum matching in $H'$ constitutes a $3/4$-approximation of $OPT$ in expectation. %

We observe that the graph $H$ constitutes a collection of disjoint paths and cycles since it is the union of two matchings. The structures responsible for small approximation factors are short paths of odd length that consist of one more $OPT$ edge than $M$ edges -- these paths are referred to as {\em augmenting paths}. We will now argue that Bob holds all the $OPT$ edges on an augmenting path more often than one would a priori expect, which allows Bob to improve over the matching edges $M$. This argument alone is sufficient to obtain a $2/3$-approximation in expectation. We show, however, as well that, overall, the expected number of short augmenting paths in $H$ can be non-trivially bounded, which enables us  to further boost the approximation factor to $3/4$. 

\paragraph{Skewed Distribution of $OPT$ Edges}

For simplicity, consider a length-$3$ augmenting path $P = e_1, e_2, e_3$ in $H$ such that $e_1, e_3 \in OPT$ and $e_2 \in M$. The argument extends naturally to longer paths. We would like to argue as follows: Since the distribution of $OPT$ edges between Alice and Bob is uniformly random, with probability $\frac{1}{4}$, both $OPT$ edges $e_1, e_3$ are contained in Bob's set $OPT_B$. Hence, Bob can augment this path and no loss is incurred. If this statement was true then, in expectation, Bob could output a path of length $\frac{1}{4} \cdot 2 + \frac{3}{4} \cdot 1 = \frac{5}{4}$, or, put differently, the expected approximation factor on a path of length $3$ would be $\frac{5}{4 \cdot 2} = \frac{5}{8}$ since the optimal solution on this path is $2$. This argument is of course flawed since, conditioned on the path $P$ existing, the distribution of $e_1, e_3 \in OPT$ may no longer be uniform, and we prove that this is indeed the case. However, conditioned on $P$ existing, the distribution of $OPT$ edges is in fact {\em skewed to our advantage}: Observe that, if $P$ exists, it is not possible that both $e_1$ and $e_3$ are contained in $OPT_A \subseteq E_A$ since this would contradict the fact that $M$ is a maximum matching in $E_A$. Hence, instead of the four possibilities as to how the edges $e_1$ and $e_3$ are partitioned between Alice and Bob, at most three possibilities remain. Observe that if exactly these three possibilities remained and their distribution was uniform then we have made a significant gain since then with probability $1/3$, both $OPT$ edges are located at Bob's end, which allows Bob to output a matching of expected size $\frac{1}{3} \cdot 2 + \frac{2}{3} \cdot 1 = \frac{4}{3}$ on path $P$, i.e., we obtain an expected approximation factor of $2/3$ on $P$. 

The key part of our analysis of this claim is to show that the uniform distribution over the remaining three cases indeed constitutes the worst case. To this end, we observe that if a partitioning of the $OPT$ edges into $OPT_A$ and $OPT_B$ leads to the existence of an odd-length path $P$ of length $2i-1$, for some $i \ge 2$, then, if we moved any of the $OPT$ edges of $P \cap OPT_A$ to $OPT_B$, then the same path $P$ would exist. In other word, conditioned on $P$ existing, the family of possible sets $OPT_B \cap P$ is {\em closed under taking super sets}. We exploit this key property to establish the claimed bound. 

To summarize this part of the argument, the skewed nature of the distribution of $OPT$ edges conditioned on a specific short augmenting path existing is sufficient to argue a $2/3$-approximation in expectation. To improve this bound further, we argue that the random partitioning of $OPT$ edges implies that the number of short augmenting paths is non-trivially bounded.

\paragraph{Bounded Number of Short Augmenting Paths}
So far we have argued that if $H$ contains an augmenting-path of length $3$ then with probability at least $1/3$, Bob holds both optimal edges contained in this path. We also already argued that Bob must hold at least one optimal edge on this path since otherwise the matching $M$ would not be maximum. Phrased differently, the expected number of optimal edges held by Bob on a path of length $3$ is at least $\frac{1}{3} \cdot 2 + \frac{2}{3} \cdot 1 = \frac{4}{3}$.

This insight allows us to bound the expected total number of length-$3$ paths in $H$ that we denote by $n_3$: Since, overall, we expect Bob to hold half of the optimal edges, i.e., $\Exp |OPT_B| = \frac{1}{2} |OPT|$, and we also argued $\Exp |OPT_B| \ge \frac{4}{3} \cdot n_3$, we obtain that $n_3 \le \frac{3}{8} |OPT|$. This is a significant gain over the worst-case bound $n_3 \le \frac{1}{2} |OPT|$, which happens when all edges of $M$ are contained in $3$-augmenting paths. 

Using similar arguments, we establish a condition that jointly bounds the expected number of length $5,7, 9, \dots$ augmenting paths in $H$. We then give a linear program that captures this condition as well as further constraints so that  the objective value of the LP lower bounds the worst-case expected approximation factor of the protocol $\Pi_{\text{LFMM}}$. We identify that both the graphs $H_3$ and $H_5$ consisting of a single length-$3$ and a single length-$5$ augmenting path, respectively, constitute worst-case examples with expected approximation factor $3/4$.

\subsection{Connection to Streaming} \label{sec:connection-streaming}
In the {\em one-pass streaming model} for graph problems, the edges of an input graph are presented to an algorithm in arbitrary order one-by-one, and the objective is to design algorithms that use as little space as possible. It is easy to see that every one-pass streaming algorithm with space $s$ can be used as a one-way two-party communication protocol with communication cost $s$. The converse, however, is not true. Hence, the one-pass streaming model is a harder model than the one-way two-party communication setting. 

The one-pass {\em random order} streaming setting, where the edges arrive in uniform random order, is closely related to the robust one-way two-party communication setting. Konrad et al. \cite{kmm12} were the first to study \textsf{Maximum Matching} in this setting and showed that improvements over the approximation factor $\frac{1}{2}$ are possible using space $O(n \log n)$. Further improvements were given by \cite{k18}, \cite{gkms19}, and \cite{fhmrr20} until Bernstein's breakthrough result \cite{b20}, who established an approximation factor of $2/3$ using the EDCS-technique. Assadi and Behnezhad then combined the EDCS-technique with augmenting paths, giving a $(2/3+\epsilon)$-approximation, for a fixed small constant $\epsilon > 0$, which constitutes the state-of-the-art result in this setting. 

\subsection{Outline}
In Section~\ref{sec:prelim}, we provide details as to how a lexicographically-first maximum matching is obtained. Then, in Section~\ref{sec:protocol}, we prove our main result, i.e., that the $\Pi_{\text{LFMM}}$ protocol achieves an approximation factor of $3/4$ in expectation, and we also argue that sending a lexicographically-first maximal matching yields an approximation factor of $5/8$. Finally, we conclude in Section~\ref{sec:conclusion} with directions for further research.

\section{Preliminaries} \label{sec:prelim}
\noindent \paragraph{Lexicographically-first Maximum Matching}
Let $G=(V, E)$ be any graph on $n$ vertices. We assume that there exists an arbitrary but fixed ordering of $V$ (e.g., integral vertex identifiers), i.e., we have access to an ordering $\sigma: V \rightarrow \{1, 2, \dots, n\}$ of the vertices of $G$. 
The ordering $\sigma$ then allows us to define an ordering of the edges $E$ so that, for two distinct edges $e_1=(u_1, v_1)$ with $\sigma(u_1) < \sigma(v_1)$ and $e_2 = (u_2, v_2)$ with $\sigma(u_2) < \sigma(v_2)$ we have $e_1 < e_2$ if $\sigma(u_1) < \sigma(u_2)$ or $(\sigma(u_1) = \sigma(u_2)$ and $\sigma(v_1) < \sigma(v_2))$.
This ordering is then extended to maximum matchings. For two distinct maximum matchings $M_1 = \{e_1, e_2, \dots \}$ with $e_1 < e_2 < \dots$ and $M_2 = \{f_1, f_2, \dots \}$ with $f_1 < f_2 < \dots$ we have $M_1 < M_2$ if there exists an index $j \in \{1, 2, \dots \}$ such that $e_i = f_i$ holds, for all $i < j$, and $e_j < f_j$. Then, we say that $M$ is a {\em lexicographically-first maximum matching} in $G$ if, for every other maximum matching $M'$, $M < M'$ holds.

\section{Our Protocol} \label{sec:protocol}
In this section, we analyse the $\Pi_{\text{LFMM}}$ protocol where Alice sends the lexicographically-first maximum matching to Bob. To this end, we first give further notation required for our analysis in Subsection~\ref{sec:notation}, prove our main technical lemma in Subsection~\ref{sec:main-tech}, give non-trivial bounds on the output size and on the size of an optimal solution in Subsection~\ref{sec:bounds-out-opt}, and finally prove our main theorem in Subsection~\ref{sec:main-thm}.

\subsection{Further Notation}\label{sec:notation}
Let $G=(V, E)$ denote the input graph, and let $OPT$ be an arbitrary but fixed maximum matching. We denote by $OPT_A$ and $OPT_B$ the subsets of $OPT$ that are assigned to Alice and Bob, respectively. We also write $E_A$ to denote the set of edges held by Alice (including $OPT_A$), and by $E_B$ the set of edges held by Bob (including $OPT_B$). We also write $\pi = (OPT_A, OPT_B)$ to denote the partitioning of $OPT$ edges into $OPT_A$ and $OPT_B$. We denote by $M$ the matching sent from Alice to Bob.

We consider the graph $H = (V, M \cup OPT)$, i.e., the graph spanned by the matching sent from Alice to Bob and the optimal edges, and its subgraph $H' = (V, M \cup OPT_B)$. Note that Bob knows the entire subgraph $H'$. We will prove that $H'$ contains a large matching in expectation. We denote by $OUT$ the matching produced by Bob.

Since the edge set of graph $H$ is the union of two matchings, the set of connected components $\mathcal{C} = \{\mathcal{C}_1, \mathcal{C}_2, \dots \}$ of $H$ consists of individual edges, paths, and cycles. We will see that, on all components of even size and on components consisting of individual edges, Bob can output optimal matchings. The key part of the analysis is to show that, even on odd-length paths,  the probability that Bob holds all optimal edges and can thus output a locally optimal matching is non-trivially bounded from below.

Given a set of components $C \in supp(\mathcal{C})$, for $i \ge 2$, let $n_i(C)$ denote the number of length $2i - 1$ augmenting paths in $C$. We also denote by $n_1(C)$ the number of optimal edges contained in even-length paths, in cycles, or that appear as isolated edges in $H$. The quantity $n_1(C)$ captures the components on which Bob's output matching is trivially optimal. 

In the following, we consider expectations and probabilities over $\pi \mid (\mathcal{C}=C)$ that we denote by $\pi \mid C$ in short, i.e., the marginal distribution of $\pi$ when the components $C$ are established within $H$. Using this notation, we first define the quantities $n_i$, for every $i \ge 1$, as follows:

\begin{align}
 n_i & = \Exp_{\pi} n_i(\mathcal{C}) = \sum_{C \in supp(\mathcal{C})} \Pr[\mathcal{C} = C] \cdot  n_i(C) \ . \label{eqn:966}
\end{align}

\subsection{Main Technical Lemmas} \label{sec:main-tech}
We are now ready to prove our main technical lemmas, which show that, for each odd-length path in $\mathcal{C}$, there is a non-trivial probability that Bob holds all optimal edges on that path, and that, for each odd-length path, we expect Bob to hold more than half of the optimal edges along that path.

\begin{lemma}\label{lem:lex-first}
 Let $G=(V, E)$ be a graph, and let $M$ denote the lexicographically-first maximum matching in $G$. Then, for every subset $F \subseteq E \setminus M$, the matching $M$ is also the lexicographically-first maximum matching in $(V, E \setminus F)$.
\end{lemma}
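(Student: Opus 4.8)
The statement says: removing any set of non-matching edges $F \subseteq E \setminus M$ cannot change the fact that $M$ is the lexicographically-first maximum matching. I would first observe that it suffices to prove this for $F = \{f\}$ a single edge, and then iterate: if $M$ is the lex-first maximum matching of $(V, E)$, then it is also the lex-first maximum matching of $(V, E \setminus \{f\})$ for any $f \in E \setminus M$, and since $M \subseteq E \setminus \{f\}$ we have $M \subseteq (E \setminus \{f\}) \setminus M'$ for any further edge; applying the single-edge case repeatedly removes all of $F$. So the core is the one-edge removal.

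\textbf{Main argument.} Fix $f \in E \setminus M$ and let $E' = E \setminus \{f\}$. First, $M$ is still a \emph{maximum} matching in $(V, E')$: a maximum matching of $(V, E')$ is in particular a matching of $(V, E)$, so has size at most $|M|$; and $M$ itself is a matching in $(V, E')$ since $f \notin M$, so $|M|$ is achieved. Hence the maximum matchings of $(V, E')$ are exactly the maximum matchings of $(V, E)$ that avoid $f$ — a subset of the maximum matchings of $(V, E)$ that in particular contains $M$. Now lex-first is just the minimum with respect to the total order $<$ on maximum matchings: $M$ is the minimum over \emph{all} maximum matchings of $(V,E)$, and the maximum matchings of $(V, E')$ form a subset of these that still contains $M$, so $M$ is a fortiori the minimum over that subset. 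Therefore $M$ is the lex-first maximum matching of $(V, E')$.

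\textbf{Where the subtlety lies.} The only thing to be careful about is that the edge order and the induced order on maximum matchings are defined purely from the vertex ordering $\sigma$, and in particular \emph{do not depend on the edge set} — removing $f$ does not relabel anything or reorder the remaining edges. So the order $<$ restricted to maximum matchings of $(V, E')$ is literally the restriction of the order $<$ on maximum matchings of $(V, E)$, which is what makes "minimum of a subset containing the global minimum equals the global minimum" go through. I should state this explicitly. I expect no real obstacle here; the one thing worth double-checking in the writeup is the iteration step, i.e., that after removing one edge the hypothesis of the lemma is restored for $(V, E')$ with the remaining edges of $F$ (which holds because $F \setminus \{f\} \subseteq E' \setminus M$), so induction on $|F|$ closes it cleanly.
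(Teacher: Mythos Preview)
Your proposal is correct and follows essentially the same approach as the paper: both hinge on the observation that $M$ remains a maximum matching in $(V, E \setminus F)$ and that every maximum matching of $(V, E \setminus F)$ is also a maximum matching of $(V, E)$, so none can be lexicographically smaller than $M$. The paper compresses this into a two-line proof by contradiction and handles arbitrary $F$ directly; your reduction to the single-edge case with induction on $|F|$ is unnecessary (the same argument works for all of $F$ at once) but harmless.
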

\begin{proof}
 For the sake of a contradiction, suppose that $M' \neq M$ was the lexicographically-first maximum matching in $(V, E \setminus F)$. Then, $M'$ is lexicographically {\em smaller} than $M$, which is a contradiction to the fact that $M$ is the lexicographically-first maximum matching in $(V, E)$. 
\end{proof}

We now use Lemma \ref{lem:lex-first} to prove Lemma \ref{lem:main}.

\begin{lemma} \label{lem:main}
 Let $C \in supp(\mathcal{C})$ be any element, and let $P \in C$ be any odd length path of length $2k-1$.  Then:
 \begin{enumerate}
  \item $\Exp_{\pi|C} |OPT_B \cap P| \ge k \frac{2^{k-1}}{2^k - 1}$, and 
  \item $\Pr_{\pi|C} \left[|OPT_B \cap P| = k \right] \ge \frac{1}{2^k - 1}$.
 \end{enumerate}
\end{lemma}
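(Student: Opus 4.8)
The plan is to analyze the structure of the odd-length augmenting path $P = v_0, v_1, \dots, v_{2k-1}$ in $H = (V, M \cup OPT)$, where the $k$ edges $e_1 = v_0v_1, e_3 = v_2v_3, \dots, e_{2k-1} = v_{2k-2}v_{2k-1}$ belong to $OPT$ and the $k-1$ edges $e_2 = v_1v_2, \dots, e_{2k-2} = v_{2k-3}v_{2k-2}$ belong to $M$. I would first condition on the event $\mathcal{C} = C$ and argue that, within this conditioning, the only relevant randomness along $P$ is the partition of the $k$ optimal edges $\{e_1, e_3, \dots, e_{2k-1}\}$ between Alice and Bob. I want to understand the family $\mathcal{F} = \{ S \subseteq \{e_1, e_3, \dots, e_{2k-1}\} : S = OPT_B \cap P \text{ for some partition consistent with } \mathcal{C} = C \}$ — equivalently, which subsets $T = OPT_A \cap P$ of the optimal edges on $P$ are consistent with $P$ appearing as a component of $H$.

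The key structural claim, which the excerpt's technique discussion flags as central, is that $\mathcal{F}$ is \emph{upward closed}: if a partition $\pi$ gives rise to component $P$ and $e \in OPT_A \cap P$, then moving $e$ from Alice to Bob still yields a partition under which $P$ appears in $H$. The reason is that $M$ is the lexicographically-first maximum matching of $E_A$; by Lemma~\ref{lem:lex-first}, removing any edge of $E_A \setminus M$ — in particular an optimal edge $e \in OPT_A \setminus M$ (note $e \notin M$ since $e \in OPT$ lies on an augmenting path, hence is unmatched by $M$) — leaves $M$ as the lexicographically-first maximum matching of the smaller edge set. So after the move, Alice still outputs exactly $M$, Bob's graph only gains the edge $e \in OPT$, and since $e$ was already in $H = M \cup OPT$, the graph $H$ is literally unchanged; thus $P$ is still a component. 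This shows $\mathcal{F}$ is closed under supersets. I should also record the baseline observation that $\emptyset \notin \mathcal{F}$: it is impossible for all of $e_1, \dots, e_{2k-1}$ to lie in $OPT_A$, because then $\{e_1, e_3, \dots, e_{2k-1}\}$ would be a matching in $E_A$ covering all of $P$'s vertices while $M$ covers only $v_1, \dots, v_{2k-2}$, so $M$ restricted to $P$ could be augmented — contradicting maximality of $M$ in $E_A$. Hence every $S \in \mathcal{F}$ is nonempty and $\mathcal{F} \subseteq 2^{[k]} \setminus \{\emptyset\}$, with $\mathcal{F}$ upward closed.

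With this combinatorial structure in hand, both inequalities follow by a worst-case argument over upward-closed families. Conditioned on $\mathcal{C}=C$, each of the $k$ optimal edges on $P$ is independently and uniformly assigned, so $\Pr_{\pi|C}[OPT_B \cap P = S] = 2^{-k} / \Pr_{\pi|C}[OPT_B \cap P \in \mathcal{F}]$ for $S \in \mathcal{F}$, uniform over $\mathcal{F}$. For part 2, I need a lower bound on $\Pr[OPT_B \cap P = [k]]$, i.e.\ on $1/|\mathcal{F}|$; since $\mathcal{F}$ is an upward-closed family not containing $\emptyset$, the largest it can be is $2^k - 1$ (all nonempty subsets), giving $\Pr[OPT_B \cap P = [k]] \ge 1/(2^k-1)$. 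For part 1, I want to lower bound $\Exp[|OPT_B \cap P|] = \frac{1}{|\mathcal{F}|}\sum_{S \in \mathcal{F}} |S|$; the main obstacle is proving that among all upward-closed families $\mathcal{F} \subseteq 2^{[k]}\setminus\{\emptyset\}$ this average is minimized by $\mathcal{F} = 2^{[k]}\setminus\{\emptyset\}$ itself, which yields $\frac{1}{2^k-1}\sum_{S \neq \emptyset} |S| = \frac{1}{2^k-1} \cdot k 2^{k-1} = k\frac{2^{k-1}}{2^k-1}$. I expect to prove this minimization by an exchange/shifting argument: if $\mathcal{F}$ is upward closed and misses some nonempty set $S$, then one can add to $\mathcal{F}$ a minimal missing set together with all its supersets; I would show that augmenting an upward-closed family in this way does not increase the average cardinality, because the sets being added, while having cardinality at least that of the minimal added set, are "diluted" by being averaged against all their supersets of larger size — more carefully, I would pair each newly added set with the structure already forced above it and check the averaged contribution. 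An alternative cleaner route: directly bound $\Exp[|OPT_B\cap P|] = \sum_{j=1}^k \Pr[e_{2j-1}\in OPT_B \mid \pi|C]$ and argue each term is at least $\frac{2^{k-1}}{2^k-1}$ using the upward-closedness of $\mathcal{F}$ restricted to the $j$-th coordinate, which reduces the whole statement to a one-coordinate monotonicity fact (an up-set has each coordinate-marginal at least its complement's, sharpened by the exclusion of $\emptyset$). I would pursue whichever of these two arguments turns out to require less bookkeeping; the coordinate-wise version looks more robust and is the step I'd write out in full.
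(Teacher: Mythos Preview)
Your approach shares the paper's key structural insight: the family of admissible $OPT_B \cap P$ is upward closed (via Lemma~\ref{lem:lex-first}) and excludes $\emptyset$ (by maximality of $M$). However, there is a genuine gap where you assert that, conditioned on $\mathcal{C}=C$, the distribution of $OPT_B \cap P$ is \emph{uniform} over $\mathcal{F}$. The event $\mathcal{C}=C$ is determined by the full partition $\pi$: Alice's matching $M$ depends on \emph{all} of $OPT_A$, not just $OPT_A \cap P$, so the number of full partitions $\pi$ realizing a given $S = OPT_B \cap P$ can vary with $S$, and the marginal on $S$ need not be uniform. The paper sidesteps this by working directly with full partitions in $supp(\pi\mid C)$ (over which the conditional distribution \emph{is} uniform), grouping them into layers $\Pi_j$ by $|OPT_B \cap P|$, and deriving $p_{i+1} \ge p_i \cdot \frac{k-i}{i+1}$ via a bipartite degree count between consecutive layers; it then minimizes subject to these constraints. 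Your argument is easily repaired along the same lines, or alternatively by further conditioning on the assignment $T$ of $OPT \setminus P$: for each fixed $T$ with nonempty fiber, the set $\{S : (S,T)\text{ yields }\mathcal{C}=C\}$ is itself upward closed and excludes $\emptyset$, and the distribution \emph{within that fiber} is genuinely uniform; your combinatorial bounds then apply fiber-wise and survive the mixture.

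A second, smaller issue: your ``alternative cleaner route'' of bounding each coordinate marginal $\Pr[e_{2j-1} \in OPT_B \mid \mathcal{C}=C] \ge \frac{2^{k-1}}{2^k-1}$ is false as stated. For $k=2$ and the upward-closed family $\mathcal{F} = \{\{1\},\{1,2\}\}$ (which excludes $\emptyset$), the marginal of coordinate $2$ under the uniform distribution on $\mathcal{F}$ is $\tfrac12 < \tfrac23$. Only the \emph{sum} of the marginals is guaranteed to meet the bound, so this route does not give a per-coordinate shortcut; you need the average-over-$\mathcal{F}$ argument (which is correct once uniformity is handled as above) or the paper's layer-counting approach.
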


\begin{proof}
Let $OPT_P$ denote the optimal edges that are contained in $P$.

We partition the support of $\pi|C$ into the $k$ sets $\Pi_1, \dots, \Pi_k$ such that $\pi \in supp(\pi|C)$ is put into $\Pi_j$ if and only if Bob receives exactly $j$ edges of $OPT_P$ under $\pi$.

We first argue that $\Pi_1, \dots, \Pi_k$ is indeed a partitioning of $supp(\pi|C)$, or, equivalently, for any $\pi \in supp(\pi|C)$, Bob holds at least one edge. To see this, suppose that Bob did not hold any such edge under $\pi$. Then, Alice holds all optimal edges within component $P$, i.e., $OPT_P \subseteq OPT_A$. This, however, is a contradiction to the fact that $M$ is a maximum matching within $E_A$ since $M$ could have been augmented by Alice using the edges $OPT_P \subseteq E_A$. Hence, Bob holds at least one edge of $OPT_P$, and $\Pi_1, \dots, \Pi_k$ is a partition of $supp(\pi|C)$.

Next, consider any $\pi \in \Pi_j$, for any $j < k$. We observe that every $\pi'$ that is obtained from $\pi$ by moving one of the $k-j$ $OPT_P \cap OPT_A$ edges held by Alice to Bob is contained in $\Pi_{j+1}$. This immediately follows from Lemma~\ref{lem:lex-first}. 

Next, we consider the directed acyclic graph $G_C = (supp(\pi|C), E_C)$ consisting of $k$ layers, where the layers constitute the sets $\Pi_1, \dots, \Pi_k$. Edges only exist between consecutive layers $\Pi_i$ and $\Pi_{i+1}$, for any $1 \le i \le k-1$, and are directed from layer $i$ to $i+1$. Let $\pi_1 \in \Pi_i$ and $\pi_2 \in \Pi_{i+1}$. Then, there exists an edge between $\pi_1$ and $\pi_2$ if and only if $\pi_2$ can be obtained from $\pi_1$ by moving one of Alice's $OPT_P$ edges to Bob. 

We will now argue the following inequality: 
\begin{align*}
|\Pi_{i+1}| \ge |\Pi_{i}| \cdot \frac{k - i}{i+1} \ .
\end{align*}
To this end, we investigate the out-degrees and in-degrees of the vertices in $G_{C}$. Let $\pi_1 \in \Pi_i$ be any vertex. Then, by the argument above, $\pi_1$ has an out-degree of $k-i$ since we established that moving any $OPT_P$ edge held by Alice to Bob yields a partitioning that is also in the support of $\pi|C$ and thus also in $\Pi_{i+1}$. Next, consider any $\pi_2 \in \Pi_{i+1}$. Then, $\pi_2$ has an in-degree of at most $i+1$ since any $\pi \in \Pi_i$ that has an edge towards $\pi_2$ can be obtained by moving one $OPT_P$ edge held by Bob to Alice. Since Bob holds only $i+1$ $OPT_P$ edges there can only be $i+1$ different such vertices. This implies the claimed inequality.

Next, let 
$$p_i := \Pr_{\pi \mid C} [| OPT_B \cap OPT_P| = i]$$
denote the probability that Bob holds exactly $i$ edges of $OPT_P$. The previous arguments give us the following constraints: 
\begin{align}
p_{i+1} & \ge p_i \cdot \frac{k-i}{i+1} \ , & \text{constraint } \mathbf{C1}  \label{eqn:01} \\
\nonumber p_0 & = 0 \ , \text{and} \\
 \sum_{i \ge 1}^k p_i & = 1 \ . & \text{constraint } \mathbf{C2}  \label{eqn:03}
\end{align}

For any $i \in [k-1]$, we apply $\mathbf{C1}$ $k-i$ times, and we obtain the bound:
\begin{align}
 p_i & \le p_{i+1} \frac{i+1}{k-i} \le p_{i+2} \frac{i+2}{k-i-1} \cdot \frac{i+1}{k-i} \nonumber 
 \le \dots \le \\
 & \le p_k \cdot \frac{k \cdot (k-1) \cdot \ldots \cdot (i+1)}{1 \cdot \ldots \cdot (k-i)} = p_k {k \choose i} \ . \label{eqn:04}
\end{align}

We can now bound $p_k$ using $\mathbf{C2}$ and Inequality~\ref{eqn:04} as follows:
$$p_k = 1 - \sum_{i=1}^{k-1} p_i \ge 1-  \sum_{i=1}^{k-1} p_k {k \choose i} = 1 - p_k (2^k - 2) \ ,$$
using the identity $\sum_{i=0}^{k} {k \choose i} = 2^k$ and ${k \choose 0} = {k \choose k} = 1$. We thus obtain $p_k \ge \frac{1}{2^k - 1}$ as claimed.

Next, we bound the expectation
\begin{align}
\Exp_{\pi \mid C} |OPT_B \cap P| = \sum_{i=1}^{k} i \cdot p_i \ . \label{eqn:05}
\end{align}
Let $q_1, \dots, q_k$ denote values for $p_1, \dots, p_k$ that minimize Equation~\ref{eqn:05} and adhere to the constraints $\mathbf{C1}$ and $\mathbf{C2}$. We claim that $q_{i+1} = q_i \cdot \frac{k-i}{i+1}$ must hold, for all $1 \le i \le k-1$, i.e., the constraints $\mathbf{C1}$ are tight and become equalities for $q$. To see this, suppose that this is not true, i.e., there is at least one index such that $\mathbf{C1}$ is not tight. Let $1 \le j \le k-1$ be the largest such index, i.e., we have
$$q_{j+1} > q_j \cdot \frac{k-j}{j+1} \ . $$ Then, we define $\epsilon > 0$ such that 
\begin{align}
q_{j+1} =  q_j \cdot \frac{k-j}{j+1} + \epsilon  \ . \label{eqn:392} 
\end{align}
We now argue that $(q_i)_{1 \le i \le k-1}$ does not minimize the expected value stated in Equation~\ref{eqn:05}. To this end, we define an alternative solution $(r_i)_{1 \le i \le k-1}$ such that: 
\begin{align*}
 r_i & = q_i, \mbox{ for all } i \neq \{j,j+1\},  \\
 r_j & = q_j + \epsilon \cdot \frac{j+1}{k+1} \ , \text{ and} \\
 r_{j+1} & = %
 q_{j+1} - \epsilon \cdot  \frac{j+1}{k+1} \ . %
\end{align*}
We observe that $r_j + r_{j+1} = q_j + q_{j+1}$, and since for every $i \notin \{j,j+1\}$, we have $r_i = q_i$, we have that $\sum_{i \ge 1}^{k} r_i = \sum_{i \ge 1}^{k} q_i = 1$, i.e., $(r_i)_{1 \le i \le k}$  fulfills constraint $\mathbf{C2}$.

Next, regarding $\mathbf{C1}$, since we only modified the indices $j$ and $j+1$, we need to check $\mathbf{C1}$ for any index $i$ such that $\{i, i+1\} \cap \{j, j+1\} \neq \{ \}$, which are the indices $i \in \{j-1, j, j+1\}$. We thus have to verify the following constraints:
\begin{align}
 q_{j+2} = r_{j+2} & \ge r_{j+1} \cdot \frac{k-(j+1)}{(j+1)+1} \ ,  \label{eqn:C1-1} \\
 r_{j+1} & \ge r_{j} \cdot \frac{k-j}{j+1} \ , \text{ and} \label{eqn:C1-2} \\
 r_{j} & \ge r_{j-1} \cdot \frac{k-(j-1))}{(j-1)+1}  = q_{j-1} \cdot \frac{k-(j-1)}{(j-1)+1}  \ . \label{eqn:C1-3}
\end{align}
Regarding Inequality~\ref{eqn:C1-1}, observe that $r_{j+1} < q_{j+1}$, which implies that this inequality holds, and
regarding Inequality~\ref{eqn:C1-3}, observe that that $r_j > q_j$, which implies that this inequality holds. We now verify that equality holds in Inequality~\ref{eqn:C1-2}:
\begin{align*}
 r_{j+1} & = q_{j+1} - \epsilon \cdot  \frac{j+1}{k+1} & \text{Definition of $r_{j+1}$}\\ 
 & =  \left( q_j \cdot \frac{k-j}{j+1} + \epsilon \right) - \epsilon \cdot  \frac{j+1}{k+1} & \text{Equality~\ref{eqn:392}} \\
 & = \left(r_j -  \epsilon \cdot \frac{j+1}{k+1} \right) \cdot \frac{k-j}{j+1} + \epsilon \left( 1 - \frac{j+1}{k+1} \right) & \text{Definition of $r_j$} \\
 & = r_j \cdot \frac{k-j}{j+1} + \epsilon \left( 1 - \frac{j+1}{k+1} - \frac{k-j}{k+1} \right) \\
 & = r_j \cdot \frac{k-j}{j+1} \ .
\end{align*}
Last, we observe that $(r_i)_{1 \le i \le k}$ yields a smaller expected value as in Inequality~\ref{eqn:05} as $(q_i)_{1 \le i \le k}$ since:
\begin{align*}
(j+1) \cdot r_{j+1}  + j \cdot r_j & = (j+1) \cdot r_{j} \cdot \frac{k-j}{j+1} + j \cdot r_j \\
& = k \cdot r_j \\
& = k \cdot \left(q_j + \epsilon \cdot \frac{j+1}{k+1} \right) \\
& < k q_j + \epsilon (j+1) \\
& = (j+1) \left(q_j \frac{k-j}{j+1} + \epsilon \right) + j \cdot q_j \\
& = (j+1) \cdot q_{j+1}  + j \cdot q_j \ .
\end{align*}
We thus conclude that, for the minimizer $(q_i)_{1 \le i \le k}$, constraint $\textbf{C1}$ is tight, for all $1 \le i \le k-1$. This, in turn, implies that $q_i =  {k \choose i} q_k$, for every $1 \le i \le k-1$. 

We thus obtain:
\begin{align*}
\Exp_{\pi \mid C} |OPT_B \cap C| & = \sum_{i=1}^{k} i \cdot p_i \ge \sum_{i=1}^{k} i \cdot q_k \cdot {k \choose i} = q_k \sum_{i=1}^{k} i \cdot {k \choose i} = q_k \left( \left(\sum_{i=1}^{k-1} i \cdot {k \choose i} \right) + k \right)  \\
& = k q_k + q_k  \frac{k}{2} \sum_{i=1}^{k-1} {k \choose i}  = k q_k + q_k  \frac{k}{2} ( 2^{k}-2 ) = k q_k \cdot 2^{k-1} \ge k \frac{2^{k-1}}{2^k - 1} \ .
\end{align*}
In the previous calculation, we used the identity $\sum_{i=1}^{k-1} i \cdot {k \choose i} = \frac{k}{2} \sum_{i=1}^{k-1} {k \choose i}$, which follows from the fact that, for any $i$, $i \cdot {k \choose i} + (k-i) \cdot {k \choose k-i} = \frac{k}{2} \cdot {k \choose i} + \frac{k}{2} \cdot {k \choose k-i}$. This completes the proof.
\end{proof}

\subsection{Bounding $OUT$ and $OPT$}\label{sec:bounds-out-opt}

Using Lemma~\ref{lem:main}, we can now bound the size of $OUT$ as follows:

\begin{lemma} \label{lem:bound-out}
The output matching $OUT$ produced by Bob is bounded in size from below as follows:
 \begin{align*}
\Exp |OUT| & \ge \sum_{i \ge 1} \left( \frac{1}{2^i - 1}  + i-1 \right) n_i  \ . 
 \end{align*}
\end{lemma}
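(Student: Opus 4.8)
The plan is to argue that Bob's output matching is at least as large as a maximum matching in the graph $H' = (V, M \cup OPT_B)$, and then to lower-bound the latter component by component. Since $H = (V, M \cup OPT)$ is a union of two matchings, its connected components are isolated edges, paths, and cycles; $H'$ is obtained from $H$ by deleting exactly the edges of $OPT_A$. Fix a realization $C \in supp(\mathcal{C})$ of the components. On the components accounted for by $n_1(C)$ — even-length paths, cycles, and isolated $OPT$ edges — the matching $M$ restricted to that component, patched with any uncovered $OPT$ edge on an isolated component, already matches every vertex touched by $OPT$, so $H'$ (which still contains all of $M$) admits a matching covering all $n_1(C)$ optimal edges; hence these contribute at least $n_1(C)$. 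On a length-$(2i-1)$ augmenting path $P$ (with $i$ optimal edges and $i-1$ edges of $M$), the $M$-edges alone give $i-1$; moreover, if Bob holds \emph{all} $i$ optimal edges of $P$, i.e. $|OPT_B \cap P| = i$, then $H'$ restricted to $P$ equals $P$ itself and contains a matching of size $i$, a gain of $1$ over $i-1$.

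Next I would take expectations. Conditioning on $\mathcal{C} = C$ and summing the per-component bounds,
\begin{align*}
\Exp_{\pi|C} |OUT| \;\ge\; n_1(C) + \sum_{i \ge 2} \Bigl( (i-1)\, n_i(C) + \Pr_{\pi|C}\bigl[\text{Bob holds all opt.\ edges of }P\bigr] \Bigr),
\end{align*}
where the last sum ranges over the $n_i(C)$ length-$(2i-1)$ augmenting paths $P$ in $C$, and for each such $P$ the second part of Lemma~\ref{lem:main} gives $\Pr_{\pi|C}[|OPT_B \cap P| = i] \ge \frac{1}{2^i-1}$. Collecting terms, the conditional bound becomes $n_1(C) + \sum_{i \ge 2} \bigl( i - 1 + \frac{1}{2^i - 1} \bigr) n_i(C)$; note that $i - 1 + \frac{1}{2^i-1}$ specializes at $i=1$ to $0 + 1 = 1$, so this is uniformly $\sum_{i \ge 1} \bigl( \frac{1}{2^i-1} + i - 1 \bigr) n_i(C)$. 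Finally, averaging over $C$ with weights $\Pr[\mathcal{C} = C]$ and using the definition $n_i = \Exp_\pi n_i(\mathcal{C}) = \sum_C \Pr[\mathcal{C}=C]\, n_i(C)$ from Equation~\eqref{eqn:966}, together with the tower property $\Exp|OUT| = \sum_C \Pr[\mathcal{C}=C]\,\Exp_{\pi|C}|OUT|$, yields the claimed inequality.

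The main obstacle is making the per-component accounting fully rigorous and additive: one must check that a maximum matching in $H'$ can be chosen as the disjoint union of locally (near-)optimal matchings on the components of $H$ (the components of $H'$ refine those of $H$, so this is fine, but it needs to be said), and that Bob's output — a maximum matching in $(V, E_B \cup M) \supseteq H'$ — is at least as large as the maximum matching in $H'$. The inequality $|OUT| \ge$ (max matching in $H'$) is immediate since $H'$ is a subgraph of $(V, E_B \cup M)$. One should also be slightly careful on the $n_1$ components of even-length paths and cycles: there the optimal edges form a \emph{perfect} matching of the component's vertex set, and $M$ restricted to the component already covers those same vertices (it is a maximum matching of a path/cycle missing at most the two endpoints of an even path, which carry no $OPT$ edge we need), so the $n_1(C)$ lower bound holds with $M$-edges alone. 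Everything else is bookkeeping with the linearity of expectation and Lemma~\ref{lem:main}.
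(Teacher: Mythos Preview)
Your proposal is correct and follows essentially the same approach as the paper: condition on the component structure $C$, observe that on even-length components and isolated edges Bob can already realise the full optimal contribution (giving $n_1(C)$), on each odd augmenting path of length $2i-1$ the $M$-edges alone give $i-1$ and the second part of Lemma~\ref{lem:main} adds $\frac{1}{2^i-1}$ in expectation, and then un-condition via the law of total expectation and the definition of $n_i$. The only cosmetic slip is the sentence claiming that on an even-length path ``$M$ restricted to the component already matches every vertex touched by $OPT$'': on such a path $M$ misses exactly one endpoint, and that endpoint \emph{is} touched by $OPT$; what you actually need (and what the paper uses) is simply that $|M\cap D| = |OPT\cap D|$ on each even component $D$, so the $M$-edges alone already give the optimal count there.
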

\begin{proof}
 Let $C \in supp(\mathcal{C})$ be any possible outcome for $\mathcal{C}$.
 
 We partition the components $C$ into odd-length paths $(P_i)_{i \ge 1}$ such that each path in $P_i$ consists of $2i - 1$ edges, and even-length paths or even-length cycles $(E_i)_{i \ge 1}$ such that each path/cycle in $E_i$ consists of $2i$ edges. For a component $D \in C$, we write $OUT_{D}$ to denote Bob's output on $D$, and by $OPT_{D}$ the optimal edges of $D$.
 
 Consider an isolated edge $\{e \} = D \in P_1$. We observe that $e$ must be an optimal edge since if it was not an optimal edge then there would be an optimal edge incident on $e$, which implies that the component containing $e$ would be of size at least $2$. Next, suppose that $e$ is in $OPT_A$. Then, Alice has included $e$ in $M$ since otherwise the matching $M$ would not have been a maximum matching, which implies that Bob knows $e$ as it was sent to Bob as part of the message $M$. Otherwise, if $e$ was in $OPT_B$ then Bob knows $e$ as well as it is part of his input. In either case, Bob can output a matching of optimal size $1$ on each path in $P_1$, i.e., $|OUT_{D}| = |OPT_{D}|$.
 
 Next, consider any even-length component $D \in E_i$ of length $2i$, for some $i$. Since such a component contains $i$ non-optimal edges that are contained in $M$, Bob can always output a matching of (optimal) size $i$ on such a component, and we thus also obtain $|OUT_{D}| = |OPT_{D}|$.
 
 The only loss incurred takes place on odd-length paths $P_i$, for some $i \ge 2$.  
  Then, as proved in Lemma~\ref{lem:main}, conditioned on $C$ existing, Bob holds all optimal edges in this path with probability $\frac{1}{2^i-1}$. Hence, in expectation, Bob can output a solution of size:
 \begin{align*}
 \Exp_{\pi \mid C} |OUT_{D}| \ge \frac{1}{2^i-1} \cdot i + (1 - \frac{1}{2^i-1}) \cdot (i-1) = \frac{1}{2^i-1}+i - 1 \ .
 \end{align*}

 We now piece all of the above together in the following:
 \begin{align*}
  \Exp_{\pi} |OUT| & = \sum_{C \in supp(\mathcal{C})} \Pr \left[\mathcal{C} = C \right] \cdot \Exp_{\pi | C} |OUT|   & \text{Law of total expectation} \\
  & =  \sum_{C \in supp(\mathcal{C})} \Pr \left[\mathcal{C}=C \right] \cdot \sum_{D \in C} \Exp_{\pi | C} |OUT_{D}| \ . & \text{Linearity of expectation}
 \end{align*}
 It remains to bound $\sum_{D \in C} \Exp_{\pi | C}  |OUT_D|$, for any fixed $C \in supp(\mathcal{C})$. We obtain:
 \begin{align*}
\sum_{D \in C} \Exp_{\pi | C}  |OUT_D| & = \underbrace{\left( \sum_{i \ge 1}  \sum_{D \in E_i} i \right) +  \left( \sum_{D \in P_1}1 \right)}_{= n_1(C)} + \left(\sum_{i \ge 2} \sum_{D \in P_i} \frac{1}{2^i-1}+i - 1 \right) \\
& = n_1(C) + \sum_{i \ge 2} \left( \frac{1}{2^i-1}+i - 1 \right) n_i(C) \ .
 \end{align*}
 Thus, overall, we obtain:
 \begin{align*}
  \Exp_{\pi} |OUT| & = \sum_{C \in supp(\mathcal{C})} \Pr \left[\mathcal{C} = C \right] \cdot \left( n_1(C) + \sum_{i \ge 2} \left( \frac{1}{2^i-1}+i - 1 \right) n_i(C)  \right) \\
  & = n_1 + \sum_{i \ge 2} \left( \frac{1}{2^i-1}+i - 1 \right) n_i \\
  & = \sum_{i \ge 1} \left( \frac{1}{2^i-1}+i - 1 \right) n_i \ .
 \end{align*}

\end{proof}

\begin{lemma}\label{lem:bound-opt}
The following bounds on the size of an optimal matching hold:
\begin{align}
 \sum_{i \ge 2} i \cdot \frac{2^{i}}{2^i - 1} \cdot  n_i  & \le |OPT| \ , \text{ and} \label{eqn:291} \\ 
 \sum_{i \ge 1} i \cdot n_i & = |OPT| \label{eqn:292} \ .
\end{align}
\end{lemma}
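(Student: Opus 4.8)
The plan is to establish the two bounds independently; both are short consequences of facts already in hand. Equation~\ref{eqn:292} I would treat as a deterministic counting identity, verified for each fixed realization of $\mathcal{C}$ and then averaged over $\mathcal{C}$. Inequality~\ref{eqn:291} I would obtain by combining the elementary identity $\Exp_{\pi} |OPT_B| = \tfrac{1}{2}|OPT|$ with the first part of Lemma~\ref{lem:main}.

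For Equation~\ref{eqn:292}, fix any $C \in supp(\mathcal{C})$. Every optimal edge lies in exactly one component of $H$, and since $H$ is the union of the two matchings $M$ and $OPT$, every component is an isolated edge, or a path or cycle along which the edges alternate between $M$ and $OPT$. On a cycle or even-length path of length $2j$ there are exactly $j$ optimal edges, and the single edge of an isolated-edge component is optimal (a non-optimal isolated edge could be added to $OPT$, contradicting its maximality); by definition these optimal edges are precisely the ones counted by $n_1(C)$. On an odd-length path of length $2i-1$ with $i \ge 2$ — which must be $M$-augmenting, since an $M$-heavy odd path would be an $OPT$-augmenting path and contradict the maximality of $OPT$ — the edges alternate $OPT, M, \dots, M, OPT$, giving exactly $i$ optimal edges, and there are $n_i(C)$ such paths. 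Hence $\sum_{i \ge 1} i \cdot n_i(C) = |OPT|$ for every realization $C$. Taking the expectation over $\mathcal{C}$ and using $n_i = \Exp_{\pi} n_i(\mathcal{C})$ together with linearity of expectation gives $\sum_{i \ge 1} i \cdot n_i = \Exp_{\pi} \sum_{i \ge 1} i \cdot n_i(\mathcal{C}) = \Exp_{\pi} |OPT| = |OPT|$.

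For Inequality~\ref{eqn:291}, I would first observe that, since $\pi$ places each of the $|OPT|$ optimal edges in $OPT_B$ independently with probability $1/2$, we have $\Exp_{\pi} |OPT_B| = \tfrac12 |OPT|$. Now fix $C \in supp(\mathcal{C})$: write $OPT_B$ as the disjoint union of $OPT_B \cap D$ over the components $D$ of $C$, and discard every term except those coming from the $M$-augmenting paths of length at least $3$; this yields the pointwise bound $|OPT_B| \ge \sum_{i \ge 2} \sum_{P} |OPT_B \cap P|$, where the inner sum ranges over the $n_i(C)$ augmenting paths $P$ of length $2i-1$. Taking $\Exp_{\pi \mid C}$, applying linearity of expectation, and invoking the first part of Lemma~\ref{lem:main} with $k = i$ on each such path gives $\Exp_{\pi \mid C} |OPT_B| \ge \sum_{i \ge 2} i \cdot \frac{2^{i-1}}{2^i - 1} \cdot n_i(C)$. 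Averaging over $\mathcal{C}$ by the law of total expectation and the definition of $n_i$ then gives $\tfrac12 |OPT| = \Exp_{\pi} |OPT_B| \ge \sum_{i \ge 2} i \cdot \frac{2^{i-1}}{2^i - 1} \cdot n_i$, and multiplying through by $2$, using $2 \cdot \tfrac{2^{i-1}}{2^i-1} = \tfrac{2^i}{2^i-1}$, finishes the proof.

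I do not expect a genuine obstacle here — the argument is essentially bookkeeping — but two points warrant care. First, Lemma~\ref{lem:main} should be invoked only for odd paths of length at least $3$ (that is, $k \ge 2$): for a length-$1$ path the ``Bob holds at least one optimal edge'' step of that lemma can fail when the isolated edge belongs to $M \cap OPT$, which is exactly why the sum in Inequality~\ref{eqn:291} starts at $i = 2$ and why isolated edges are folded into $n_1$ rather than constrained by an inequality of their own. Second, one must pass correctly between the conditional expectations $\Exp_{\pi \mid C}$ of Lemma~\ref{lem:main} and the unconditional averages $n_i$ via the law of total expectation, exactly as in the proof of Lemma~\ref{lem:bound-out}; discarding the non-augmenting components \emph{before} averaging is what keeps this step transparent.
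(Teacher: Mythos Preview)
Your proposal is correct and follows essentially the same approach as the paper: both establish Equation~\ref{eqn:292} by the deterministic per-component count $\sum_{i \ge 1} i \cdot n_i(C) = |OPT|$ and then average over $\mathcal{C}$, and both establish Inequality~\ref{eqn:291} by combining $\Exp_{\pi}|OPT_B| = \tfrac{1}{2}|OPT|$ with the lower bound from Lemma~\ref{lem:main} summed over the augmenting paths of length $\ge 3$ and averaged via the law of total expectation. Your closing remarks (why the sum begins at $i=2$, why odd paths are $M$-augmenting rather than $OPT$-augmenting) make explicit a couple of points the paper leaves implicit, but the argument is otherwise the same.
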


\begin{proof}
Let $C \in supp(\mathcal{C})$ be any possible outcome. 

As in the proof of the previous lemma, we consider the partitioning of the components $C$ into odd-length paths $(P_i)_{i \ge 1}$ such that each path in $P_i$ consists of $2i - 1$ edges and even-length paths or even-length cycles $(E_i)_{i \ge 1}$ such that each path/cycle in $E_i$ consists of $2i$ edges.

We argue the lower bound on $|OPT|$ stated in Inequality~\ref{eqn:291} first.

Since every edge in $OPT$ is also contained in $OPT_B$ with probability $\frac{1}{2}$, by linearity of expectation, we obtain
 \begin{align*}
  \Exp |OPT_B| & = |OPT| / 2 \ .
 \end{align*}
Next, as proved in Lemma~\ref{lem:main}, for each odd-length path, in expectation, Bob holds more than half of the $OPT$ edges. We use this to give a lower bound on the expected size of $OPT_B$. 
For the set of components $C$, we denote by $P_i(C)$ the odd-length paths of length $2i-1$. We obtain:
\begin{align*}
 |OPT| / 2 = \Exp |OPT_B| & \ge \sum_{C \in supp(\mathcal{C})} \Pr \left[\mathcal{C} = C \right] \cdot \left( \sum_{i \ge 2} \sum_{D \in P_i(C)} \Exp_{\pi \mid C} |OPT_B \cap D| \right) \\
 & \ge \sum_{C \in supp(\mathcal{C})} \Pr \left[\mathcal{C} = C \right] \cdot \left( \sum_{i \ge 2} \sum_{D \in P_i(C)} i \cdot \frac{2^{i-1}}{2^i -1} \right) & \text{Lemma~\ref{lem:main}} \\
 & \ge \sum_{C \in supp(\mathcal{C})} \Pr \left[\mathcal{C} = C \right] \cdot \left( \sum_{i \ge 2} n_i(C) \cdot  i \cdot \frac{2^{i-1}}{2^i -1} \right) \\
 & = \sum_{i \ge 2} n_i  \cdot i \cdot \frac{2^{i-1}}{2^i -1} \ ,  & \text{Equality~\ref{eqn:966}}
\end{align*}
which implies the result.

Next, we argue Equality~\ref{eqn:292}. We observe that each component in $E_i$ contains $i$ edges from $OPT$, and each component in $P_i$ also contains $i$ edges from $OPT$. Thus, overall, we obtain that for any set of components in $C \in supp(\mathcal{C})$:
\begin{align*}
 |OPT| = n_1(C) + \sum_{i \ge 2} i \cdot n_i(C) = \sum_{i \ge 1} i \cdot n_i(C) \ .
\end{align*}
Recall that, by definition of $n_1(C)$, $n_1(C)$ counts all the optimal edges in even-length components in $C$ as well as in paths of length $1$ in $C$. Equality~\ref{eqn:292} follows by taking the expected value on the previous equality and observing that $\Exp_\pi |OPT| = |OPT|$. This completes the proof.
\end{proof}

\subsection{Our Main Result}\label{sec:main-thm}
We are now ready to prove our main result stated in the following theorem:

\begin{figure}
 \begin{center} \fbox{\begin{minipage}{0.8\textwidth}
 \begin{align}
   \text{minimize } \sum_{i \ge 1} & \left(\frac{1}{2^i-1} + i -  1 \right)  m_i  \nonumber  \\
  \text{subject to} \hspace{0.1cm}\sum_{i \ge 2} i \cdot \frac{2^{i}}{2^i - 1} \cdot  m_i & \le 1  \label{eqn:lp-1} \\
   \sum_{i\ge 1}i \cdot  m_i & = 1 \label{eqn:lp-2} \\
   \text{For all $i$:} \quad m_i & \ge 0  \ . 
 \end{align} \end{minipage}} \end{center}
 \caption{LP whose objective value constitutes a lower bound on the approximation factor of $\Pi_{\text{LFMM}}$. \label{fig:lp}}
\end{figure}

\begin{theorem}
 The protocol $\Pi_{\text{LFMM}}$ has an expected approximation factor of at least $3/4$.
\end{theorem}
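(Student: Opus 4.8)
The plan is to package Lemmas~\ref{lem:bound-out} and~\ref{lem:bound-opt} into the linear program of Figure~\ref{fig:lp} and then to certify that this LP has optimum value at least $3/4$. First I would dispose of the trivial case $|OPT|=0$ (there Bob outputs the empty matching, which is optimal) and otherwise normalize by setting $m_i := n_i/|OPT|$ for every $i\ge1$. Since $G$ is finite, only finitely many $n_i$ are nonzero, so every sum below is a finite sum. Dividing Equality~\eqref{eqn:292} by $|OPT|$ gives $\sum_{i\ge1} i\,m_i = 1$, i.e.\ constraint~\eqref{eqn:lp-2}; dividing Inequality~\eqref{eqn:291} by $|OPT|$ gives $\sum_{i\ge2} i\cdot\frac{2^i}{2^i-1}\,m_i \le 1$, i.e.\ constraint~\eqref{eqn:lp-1}; and $m_i\ge0$ is immediate. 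Hence $(m_i)_{i\ge1}$ is feasible for the LP. By Lemma~\ref{lem:bound-out}, dividing by $|OPT|$,
\[
 \frac{\Exp|OUT|}{|OPT|} \;\ge\; \sum_{i\ge1}\Bigl(\frac{1}{2^i-1}+i-1\Bigr) m_i,
\]
and the right-hand side is exactly the LP objective evaluated at $(m_i)$. So the expected approximation ratio is at least the optimum value of the LP, and everything reduces to lower bounding that optimum by $3/4$.

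For the lower bound I would avoid formally dualizing and instead establish a single termwise inequality that serves as the dual certificate (the implicit dual multipliers are $1$ for constraint~\eqref{eqn:lp-2} and $\tfrac14$ for constraint~\eqref{eqn:lp-1}): for every $i\ge1$,
\[
 \frac{1}{2^i-1}+i-1 \;\ge\; i \;-\; \frac14\cdot i\cdot\frac{2^i}{2^i-1},
\]
where for $i=1$ the last term is read as $0$, consistent with constraint~\eqref{eqn:lp-1} ranging over $i\ge2$, so that equality holds there. Granting this, for any feasible $(m_i)$ I multiply by $m_i\ge0$, sum over $i$, and invoke the two LP constraints:
\[
 \sum_{i\ge1}\Bigl(\frac{1}{2^i-1}+i-1\Bigr)m_i \;\ge\; \sum_{i\ge1} i\,m_i \;-\; \frac14\sum_{i\ge2} i\cdot\frac{2^i}{2^i-1}\,m_i \;\ge\; 1-\frac14 \;=\; \frac34,
\]
which completes the proof.

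The only genuine computation — and the step I expect to be the (mild) obstacle — is verifying the termwise inequality for $i\ge2$. Clearing the common positive denominator $2^i-1$ and cancelling, it is equivalent to $i\cdot 2^{i-2}\ge 2^i-2$, i.e.\ $i\ge 4-2^{3-i}$. This holds with equality at $i\in\{2,3\}$ and strictly for all $i\ge4$, where the right-hand side is below $4\le i$; so the inequality is elementary once phrased this way. Since equality in the termwise bound occurs precisely at $i\in\{1,2,3\}$, the estimate is tight exactly at the LP vertices supported on $\{m_1,m_2\}$ and on $\{m_1,m_3\}$ — concretely $m_1=\tfrac14,\ m_2=\tfrac38$ and $m_1=\tfrac18,\ m_3=\tfrac{7}{24}$ — which correspond to the single length-$3$ and single length-$5$ augmenting-path instances, thereby recovering the claimed tightness of the $3/4$ bound.
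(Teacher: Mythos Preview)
Your proposal is correct. Up to the point of reducing to the LP of Figure~\ref{fig:lp}, your argument matches the paper's verbatim. The difference is in how the LP is bounded below by $3/4$. The paper proceeds structurally: Lemma~\ref{lem:lp} shows, via a mass-shifting argument, that any optimizer can be assumed to have $m_i=0$ for all $i\ge 3$, after which the two-variable LP in $m_1,m_2$ is solved by hand. You instead exhibit an explicit dual certificate, namely the termwise inequality
\[
\frac{1}{2^i-1}+i-1 \;\ge\; i - \tfrac14\cdot i\cdot\frac{2^i}{2^i-1}\qquad(i\ge 2),
\]
which after clearing denominators becomes $i\cdot 2^{i-2}\ge 2^i-2$ and is immediate. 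This is shorter and more transparent than the paper's Lemma~\ref{lem:lp}: it avoids the case analysis of the shifting argument, and as a by-product the equality cases $i\in\{1,2,3\}$ directly expose \emph{both} tight LP vertices, the one supported on $\{m_1,m_2\}$ and the one supported on $\{m_1,m_3\}$, corresponding to the length-$3$ and length-$5$ augmenting-path instances. The paper's approach, on the other hand, makes the reduction to short paths explicit, which arguably gives more structural intuition even if it is longer.
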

\begin{proof}
For every $i \ge 1$, let $m_i = n_i / |OPT|$ be the normalized version of $n_i$. Then, by Lemma~\ref{lem:bound-out}, the expected approximation factor of $\Pi_{\text{LFMM}}$ can be bounded as follows:
\begin{align}
 \Exp \frac{|OUT|}{|OPT|} & = \frac{\Exp |OUT|}{|OPT|} \geq  \frac{\sum_{i \ge 1} \left( \frac{1}{2^i -1} + i - 1 \right) n_i}{|OPT|} = \sum_{i \ge 1}  \left(\frac{1}{2^i -1} + i - 1 \right) m_i \ .
\end{align}
Our aim is to identify the worst-case assignment to the values $(m_i)_{i \ge 1}$ that minimize the expected approximation factor. From Lemma~\ref{lem:bound-opt}, we obtain the constraints (by dividing the inequality by $|OPT|$):
\begin{align}
 \sum_{i \ge 2} i\cdot\frac{2^i}{2^i - 1} \cdot m_i & \le 1  \ , \text{ and} \\ 
 \sum_{i \ge 1} i \cdot m_i & = 1 \ .
\end{align}
We thus see that the approximation factor of our protocol is bounded from below by the objective value of the linear program illustrated in Figure~\ref{fig:lp}. 

We will show in Lemma~\ref{lem:lp} that there exists an assignment that minimizes the objective value such that $m_i = 0$, for all $i \ge 3$ holds. Consider thus such a solution. Then, we obtain the following simplified LP: Minimize $m_1 + \frac{4}{3} m_2$ subject to $m_1, m_2$ being non-negative numbers such that $\frac{8}{3} m_2 \le 1$ and $m_1 + 2 m_2 = 1$. These constraints yield that $m_2 = \frac{3}{8}$ and $m_1 = \frac{1}{4}$ minimize the objective value, which amounts to $m_1 + \frac{4}{3} m_2 = \frac{1}{4} + \frac{4}{3} \cdot \frac{3}{8} = \frac{3}{4}$. This completes the proof.
\end{proof}

\begin{lemma} \label{lem:lp}
 Consider the linear program of Figure~\ref{fig:lp}. Then, there exists an assignment $m'_1, m'_2, m'_3, \dots$ that minimizes the objective value such that $m'_i = 0$, for every $i \ge 3$. 
\end{lemma}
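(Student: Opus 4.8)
The plan is to show that any feasible solution $(m_i)_{i \ge 1}$ to the LP of Figure~\ref{fig:lp} can be transformed into another feasible solution of no greater objective value in which all mass on indices $i \ge 3$ has been shifted onto the indices $1$ and $2$. To do this, I would consider, for each fixed index $j \ge 3$, the operation of removing the contribution $m_j$ and redistributing it to $m_1$ and $m_2$ (adjusting these two so that the equality constraint~\eqref{eqn:lp-2} $\sum_i i \cdot m_i = 1$ is preserved), and then check that constraint~\eqref{eqn:lp-1} remains satisfied and that the objective does not increase. Since each such move strictly reduces the number of nonzero coordinates with index $\ge 3$ (or the size of the largest such index), finitely many applications yield the desired solution; alternatively one can argue directly that an optimal solution of the LP exists and is attained at a vertex, and a vertex of this polytope has at most two nonzero coordinates (two equality/inequality constraints plus nonnegativity), then verify that putting the two nonzero coordinates at indices $1,2$ is at least as good as any other pair.

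Concretely, I would set up the single-index swap as follows. Fix $j \ge 3$ with $m_j > 0$. Define a new solution $m'$ by $m'_i = m_i$ for $i \notin \{1,2,j\}$, $m'_j = 0$, and choose $m'_1 = m_1 + \alpha$, $m'_2 = m_2 + \beta$ with $\alpha, \beta \ge 0$ chosen so that $1 \cdot \alpha + 2 \cdot \beta = j \cdot m_j$; this guarantees~\eqref{eqn:lp-2} still holds. The cleanest choice is to move all of the ``$j$-weight'' onto index $2$ as far as possible, i.e.\ $\beta = j m_j / 2$, $\alpha = 0$ (or split it, whichever makes the inequality verification go through). For constraint~\eqref{eqn:lp-1}, one must check that replacing the term $j \cdot \frac{2^j}{2^j-1} m_j$ by $2 \cdot \frac{4}{3} \beta = \frac{4}{3} j m_j$ does not increase the left-hand side, i.e.\ that $\frac{4}{3} \le \frac{2^j}{2^j - 1}$ for $j \ge 3$ — but this is false ($\frac{2^j}{2^j-1} < \frac43$ already for $j = 3$), so I would instead route the weight to index $1$: replacing $j \cdot \frac{2^j}{2^j - 1} m_j$ by $0$ strictly decreases the left side of~\eqref{eqn:lp-1} (index $1$ does not appear in it), so~\eqref{eqn:lp-1} is preserved when $\alpha = j m_j$, $\beta = 0$. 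The remaining and genuinely load-bearing step is the objective: we must show $\left(\frac{1}{2^j - 1} + j - 1\right) m_j \ge \left(\frac{1}{2^1 - 1} + 1 - 1\right)\alpha = \alpha \cdot 1 = j m_j$, i.e.\ $\frac{1}{2^j - 1} + j - 1 \ge j$, i.e.\ $\frac{1}{2^j - 1} \ge 1$, which fails for $j \ge 2$. So pure routing to index $1$ is too lossy in the objective and pure routing to index $2$ breaks~\eqref{eqn:lp-1}; the real argument must balance the two.

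Hence the main obstacle — and the heart of the lemma — is choosing the split $(\alpha,\beta)$ correctly and showing simultaneously that (i) $2^j/(2^j-1)$-weighted load in~\eqref{eqn:lp-1} does not increase and (ii) the objective coefficient ``per unit of $\sum i m_i$-mass'' at index $j$, namely $\frac{1}{j}\left(\frac{1}{2^j-1} + j - 1\right) = 1 - \frac1j + \frac{1}{j(2^j-1)}$, is at least the corresponding per-unit cost of whatever convex combination of indices $1$ and $2$ we move the mass to. The per-unit costs at $1$ and $2$ are $1$ and $\frac12\left(\frac13 + 1\right) = \frac23$ respectively, so moving a unit of mass to index $1$ costs $1$ and to index $2$ costs $\frac23$; since $1 - \frac1j + \frac{1}{j(2^j-1)} < 1$, we cannot afford to send everything to index $1$, and I would therefore send as much as possible to index $2$ subject to not violating~\eqref{eqn:lp-1}, sending the remainder to index $1$. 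One shows the per-unit cost $1 - \frac1j + \frac{1}{j(2^j-1)}$ lies between $\frac23$ and $1$ for all $j \ge 2$ (it is increasing in $j$ and equals $\frac23 + \frac19 > \frac23$ at... one checks $j=2$ gives $\frac12 \cdot \frac43 = \frac23$ exactly, $j = 3$ gives $\frac13(\frac17 + 2) = \frac{5}{7} > \frac23$, and it tends to $1$), so the convex combination needed always exists. I would then verify that after this redistribution~\eqref{eqn:lp-1} still holds with the available slack, and iterate over $j = 3, 4, \dots$ down to finitely many nonzero entries (or invoke that the feasible region, intersected with a bounding box forced by~\eqref{eqn:lp-2}, is a polytope whose optimum is attained at a vertex supported on $\le 2$ coordinates, and rule out vertices using any index $\ge 3$ by the per-unit-cost comparison above).

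Finally, I would record that once all mass sits on indices $1$ and $2$, constraint~\eqref{eqn:lp-1} reads $\frac83 m_2 \le 1$ and~\eqref{eqn:lp-2} reads $m_1 + 2m_2 = 1$, and the objective is $m_1 + \frac43 m_2$; substituting $m_1 = 1 - 2m_2$ gives objective $1 - \frac23 m_2$, which is minimized by taking $m_2$ as large as allowed, namely $m_2 = \frac38$, yielding the value $\frac34$ — consistent with the main theorem. This last verification is routine and I would state it briefly rather than belabor it; the substance of Lemma~\ref{lem:lp} is entirely in the redistribution argument and the elementary inequality $\frac23 \le 1 - \frac1j + \frac{1}{j(2^j-1)} \le 1$ for $j \ge 2$.
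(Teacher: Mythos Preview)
Your plan is the same as the paper's: zero out $m_j$ for $j\ge 3$ by redistributing its mass to $m_1$ and $m_2$ while keeping \eqref{eqn:lp-2} exact and \eqref{eqn:lp-1} valid. The paper makes the specific choice $\beta=\tfrac{3}{8}\,j\,\tfrac{2^j}{2^j-1}\,m_j$ and $\alpha=j m_j-2\beta$, which is exactly your ``send as much to index~2 as \eqref{eqn:lp-1} allows'' (it preserves the left-hand side of \eqref{eqn:lp-1}), and then checks that the objective does not increase; that check reduces to the elementary inequality $2^j\le 2+j\,2^{j-2}$ for $j\ge 3$.

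The one genuine gap in your write-up is the claim that the ``substance'' of the lemma is the inequality $\tfrac{2}{3}\le c_j\le 1$, where $c_j=\tfrac{1}{j}\bigl(\tfrac{1}{2^j-1}+j-1\bigr)$ is the per-unit cost. That inequality only says that \emph{some} split $(\alpha,\beta)$ with $\alpha+2\beta=jm_j$ leaves the objective unchanged; it does \emph{not} say that the particular split forced by \eqref{eqn:lp-1} is good enough. In your own notation, writing $\theta=\alpha/(jm_j)$, preserving \eqref{eqn:lp-1} forces $\theta\ge \theta_{LP}:=1-\tfrac{3}{4}\,\tfrac{2^j}{2^j-1}$, while not increasing the objective forces $\theta\le \theta_{obj}:=3c_j-2$. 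Your stated inequality is $\theta_{obj}\in[0,1]$; what is actually needed is $\theta_{LP}\le \theta_{obj}$, which is a different (and not implied) condition. Unwinding it gives exactly $2^j\le 2+j\,2^{j-2}$, with equality at $j=3$. So when you carry out the ``I would then verify'' step, this is the inequality you must prove, not the one you highlighted.
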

\begin{proof}
 Let $m_1, m_2, \dots$ denote a solution that minimized the objective value, and suppose that there exists an index $j \ge 3$ such that $m_j > 0$. Then, we define an alternative solution $m'_1, m'_2, \dots$ as follows: 
 \begin{align*}
  \text{For every } i \notin \{1, 2, j\}: m_i' & = m_i \ , \\
  m'_j & = 0 \ , \\
  m'_2 & = m_2 + \frac{3}{8} m_j \cdot j \cdot \frac{2^j}{2^j - 1} \ , \text{ and } \\ 
  m_1' &= m_1 +  j \cdot m_j \cdot(1 - \frac{3}{4} \cdot  \frac{2^j}{2^j -1}) \ . 
 \end{align*}
  We verify that $m'$ is a valid solution of the linear program of Figure~\ref{fig:lp} with an objective value that is at most the one given by $m$. To this end, consider first Inequality~\ref{eqn:lp-1}. We have:
  \begin{align*}
   \sum_{i \ge 2} i \cdot \frac{2^{i}}{2^i - 1} \cdot  m'_i & = 2 \cdot \frac{4}{3} \cdot \left(m_2 + \frac{3}{8} m_j \cdot j \cdot \frac{2^j}{2^j - 1}\right) + \sum_{i \ge 3, i \neq j} i \cdot \frac{2^i}{2^i - 1} \cdot m_i \\
   & = \sum_{i \ge 2} i \cdot \frac{2^i}{2^i-1} m_i \le 1 \ .
  \end{align*} 
Next, we verify Equality~\ref{eqn:lp-2}. We have:
  \begin{align*}
   m_1' + \sum_{k \ge 2} k \cdot m_k' & = \left( m_1 +  j \cdot m_j \cdot(1 - \frac{3}{4} \cdot  \frac{2^j}{2^j -1})\right) + 2 \cdot \left(m_2 + \frac{3}{8} m_j \cdot j \cdot \frac{2^j}{2^j - 1}\right) + \sum_{i \ge 3, i \neq j} i \cdot m_i \\
   & =  m_1 + 2 \cdot m_2 + j \cdot m_j + \left(\sum_{i \ge 3, i \neq j} m_i \right) = m_1 + \sum_{k \ge 2} k \cdot m_k = 1 \ .
  \end{align*}
Last, regarding the objective value, we have: 
\begin{align*}
 \sum_{i \ge 1}  & \left(\frac{1}{2^i-1} + i -  1 \right)  m'_i = \\
 & \left(m_1 +  j \cdot m_j \cdot(1 - \frac{3}{4} \cdot  \frac{2^j}{2^j -1}) \right) + \frac{4}{3} \cdot \left( m_2 + \frac{3}{8} m_j \cdot j \cdot \frac{2^j}{2^j - 1} \right) + \sum_{i\ge 3, i \neq j} \left(\frac{1}{2^i-1} + i -  1 \right)  m_i \\
 & = \left( \sum_{k\ge 1, k \neq j} \left(\frac{1}{2^k-1} + k -  1 \right)  m_k \right) +  m_j \cdot \underbrace{j \cdot \left( 1 - \frac{2^{j-2}}{2^j -1}  \right)}_{:=X} \ . %
\end{align*}
It remains to show that $X \le \frac{1}{2^j-1}+j-1$. We compute:
\begin{align*}
 X = j \cdot \left( 1 - \frac{2^{j-2}}{2^j -1}  \right) = j \cdot \frac{3 \cdot 2^{j-2} - 1}{2^j - 1} & \stackrel{!}{\le} \frac{1}{2^j-1}+j-1 \quad \Leftrightarrow \\
 j \cdot 3 \cdot 2^{j-2} - j & \le 1 + j \cdot 2^j - j - 2^j + 1 \\
 j \cdot 3 \cdot 2^{j-2} & \le 2 + j 2^j - 2^j \\
 2^j & \le 2 + j \cdot 2^{j-2} \ ,
\end{align*}
which holds for every $j \ge 3$.
\end{proof}

\subsection{Sending a Lexicographically-first Maximal Matching}
Our analysis can easily be adapted to the alternative protocol $\Pi'$ where Alice sends a lexicographically-first {\em maximal matching} to Bob, which achieves an expected approximation factor of $5/8$, and this is also tight. 

The approximation factor of $5/8$ can be seen as follows. Via a similar analysis as given in Lemma~\ref{lem:main}, it can be seen that, for each augmenting-path of length $3$ in the graph $H$ spanned by the lexicographically-first maximal matching $M$ on Alice's edges $E_A$ and the optimal edges $OPT$, the probability that Bob holds both optimal edges is at least $1/4$. This bound is worse than the $1/3$ bound that we established for the $\Pi_{\text{LFMM}}$ protocol, which is due to the fact that in $\Pi_{\text{LFMM}}$, it is not possible that Bob holds no optimal edges while this can be the case in $\Pi'$. 

To complete the analysis of $\Pi'$, Bob thus achieves a matching of expected size $\frac{3}{4} \cdot 1 + \frac{1}{4} \cdot 2 = \frac{5}{4}$ on each augmenting path of length $3$ in $H$, which implies that the expected approximation factor on such a path is $\frac{5}{2\cdot 4} = \frac{5}{8}$. Last, we observe that, for any other component (either even-length paths or cycles, single edges, or odd-length paths of length at least $5$), Bob trivially achieves a better than $5/8$-approximation, which establishes the expected approximation factor of $\frac{5}{8}$ of $\Pi'$.

Last, we remark that the graph consisting of a single path of length $3$ where the middle edge is given to Alice yields an expected approximation factor of $5/8$ for $\Pi'$.

\section{Conclusion} \label{sec:conclusion}
In this paper, we showed that the $\Pi_{\text{LFMM}}$ protocol achieves a $3/4$-approximation in expectation in the semi-robust setting and that our analysis is tight, in that there are graphs on which an expectation approximation factor of $3/4$ is achieved. We also showed that if Alice sends a lexicographically-first maximal matching then an approximation factor of $5/8$ is achieved, which is also tight.

We conclude with two avenues for further research:
\begin{enumerate}
 \item What is the expected approximation factor of $\Pi_{\text{LFMM}}$ in the fully robust setting? Are there variants of the protocol, e.g., sending a uniform random maximum matching, that achieve the state-of-the-art bound of $5/6$ or potentially even beat this bound?
 \item Are there either variants of the $\Pi_{\text{LFMM}}$ protocol or entirely different approaches that yield a better than $3/4$-approximation factor in the semi-robust setting?
\end{enumerate}

\bibliographystyle{plain}
\bibliography{bibliography}

\begin{thebibliography}{10}

\bibitem{abbms19}
Sepehr Assadi, MohammadHossein Bateni, Aaron Bernstein, Vahab~S. Mirrokni, and
  Cliff Stein.
\newblock Coresets meet {EDCS:} algorithms for matching and vertex cover on
  massive graphs.
\newblock In Timothy~M. Chan, editor, {\em Proceedings of the Thirtieth Annual
  {ACM-SIAM} Symposium on Discrete Algorithms, {SODA} 2019, San Diego,
  California, USA, January 6-9, 2019}, pages 1616--1635. {SIAM}, 2019.

\bibitem{ab21}
Sepehr Assadi and Soheil Behnezhad.
\newblock Beating two-thirds for random-order streaming matching.
\newblock In Nikhil Bansal, Emanuela Merelli, and James Worrell, editors, {\em
  48th International Colloquium on Automata, Languages, and Programming,
  {ICALP} 2021, July 12-16, 2021, Glasgow, Scotland (Virtual Conference)},
  volume 198 of {\em LIPIcs}, pages 19:1--19:13. Schloss Dagstuhl -
  Leibniz-Zentrum f{\"{u}}r Informatik, 2021.

\bibitem{ab21a}
Sepehr Assadi and Soheil Behnezhad.
\newblock {On the Robust Communication Complexity of Bipartite Matching}.
\newblock In Mary Wootters and Laura Sanit\`{a}, editors, {\em Approximation,
  Randomization, and Combinatorial Optimization. Algorithms and Techniques
  (APPROX/RANDOM 2021)}, volume 207 of {\em Leibniz International Proceedings
  in Informatics (LIPIcs)}, pages 48:1--48:17, Dagstuhl, Germany, 2021. Schloss
  Dagstuhl -- Leibniz-Zentrum f{\"u}r Informatik.

\bibitem{ab19}
Sepehr Assadi and Aaron Bernstein.
\newblock Towards a unified theory of sparsification for matching problems.
\newblock In Jeremy~T. Fineman and Michael Mitzenmacher, editors, {\em 2nd
  Symposium on Simplicity in Algorithms, {SOSA} 2019, January 8-9, 2019, San
  Diego, CA, {USA}}, volume~69 of {\em OASIcs}, pages 11:1--11:20. Schloss
  Dagstuhl - Leibniz-Zentrum f{\"{u}}r Informatik, 2019.

\bibitem{ab23}
Amir Azarmehr and Soheil Behnezhad.
\newblock {Robust Communication Complexity of Matching: EDCS Achieves 5/6
  Approximation}.
\newblock In Kousha Etessami, Uriel Feige, and Gabriele Puppis, editors, {\em
  50th International Colloquium on Automata, Languages, and Programming (ICALP
  2023)}, volume 261 of {\em Leibniz International Proceedings in Informatics
  (LIPIcs)}, pages 14:1--14:15, Dagstuhl, Germany, 2023. Schloss Dagstuhl --
  Leibniz-Zentrum f{\"u}r Informatik.

\bibitem{bk22}
Soheil Behnezhad and Sanjeev Khanna.
\newblock New trade-offs for fully dynamic matching via hierarchical {EDCS}.
\newblock In Joseph~(Seffi) Naor and Niv Buchbinder, editors, {\em Proceedings
  of the 2022 {ACM-SIAM} Symposium on Discrete Algorithms, {SODA} 2022, Virtual
  Conference / Alexandria, VA, USA, January 9 - 12, 2022}, pages 3529--3566.
  {SIAM}, 2022.

\bibitem{b20}
Aaron Bernstein.
\newblock {Improved Bounds for Matching in Random-Order Streams}.
\newblock In Artur Czumaj, Anuj Dawar, and Emanuela Merelli, editors, {\em 47th
  International Colloquium on Automata, Languages, and Programming (ICALP
  2020)}, volume 168 of {\em Leibniz International Proceedings in Informatics
  (LIPIcs)}, pages 12:1--12:13, Dagstuhl, Germany, 2020. Schloss Dagstuhl --
  Leibniz-Zentrum f{\"u}r Informatik.

\bibitem{b24}
Aaron Bernstein.
\newblock Improved bounds for matching in random-order streams.
\newblock {\em Theory Comput. Syst.}, 68(4):758--772, 2024.

\bibitem{bs15}
Aaron Bernstein and Cliff Stein.
\newblock Fully dynamic matching in bipartite graphs.
\newblock In Magn{\'u}s~M. Halld{\'o}rsson, Kazuo Iwama, Naoki Kobayashi, and
  Bettina Speckmann, editors, {\em Automata, Languages, and Programming}, pages
  167--179, Berlin, Heidelberg, 2015. Springer Berlin Heidelberg.

\bibitem{ab15}
Aaron Bernstein and Cliff Stein.
\newblock Fully dynamic matching in bipartite graphs.
\newblock In Magn{\'{u}}s~M. Halld{\'{o}}rsson, Kazuo Iwama, Naoki Kobayashi,
  and Bettina Speckmann, editors, {\em Automata, Languages, and Programming -
  42nd International Colloquium, {ICALP} 2015, Kyoto, Japan, July 6-10, 2015,
  Proceedings, Part {I}}, volume 9134 of {\em Lecture Notes in Computer
  Science}, pages 167--179. Springer, 2015.

\bibitem{ccm16}
Amit Chakrabarti, Graham Cormode, and Andrew McGregor.
\newblock Robust lower bounds for communication and stream computation.
\newblock {\em Theory of Computing}, 12(10):1--35, 2016.

\bibitem{fhmrr20}
Alireza Farhadi, Mohammad~Taghi Hajiaghayi, Tung Mai, Anup Rao, and Ryan~A.
  Rossi.
\newblock Approximate maximum matching in random streams.
\newblock In Shuchi Chawla, editor, {\em Proceedings of the 2020 {ACM-SIAM}
  Symposium on Discrete Algorithms, {SODA} 2020, Salt Lake City, UT, USA,
  January 5-8, 2020}, pages 1773--1785. {SIAM}, 2020.

\bibitem{fkmsz05}
Joan Feigenbaum, Sampath Kannan, Andrew McGregor, Siddharth Suri, and Jian
  Zhang.
\newblock On graph problems in a semi-streaming model.
\newblock {\em Theor. Comput. Sci.}, 348(2):207--216, December 2005.

\bibitem{gkms19}
Buddhima Gamlath, Sagar Kale, Slobodan Mitrovic, and Ola Svensson.
\newblock Weighted matchings via unweighted augmentations.
\newblock In Peter Robinson and Faith Ellen, editors, {\em Proceedings of the
  2019 {ACM} Symposium on Principles of Distributed Computing, {PODC} 2019,
  Toronto, ON, Canada, July 29 - August 2, 2019}, pages 491--500. {ACM}, 2019.

\bibitem{gkk12}
Ashish Goel, Michael Kapralov, and Sanjeev Khanna.
\newblock On the communication and streaming complexity of maximum bipartite
  matching.
\newblock In {\em Proceedings of the Twenty-Third Annual ACM-SIAM Symposium on
  Discrete Algorithms}, SODA'12, pages 468--485, USA, 2012. Society for
  Industrial and Applied Mathematics.

\bibitem{k18}
Christian Konrad.
\newblock A simple augmentation method for matchings with applications to
  streaming algorithms.
\newblock In Igor Potapov, Paul~G. Spirakis, and James Worrell, editors, {\em
  43rd International Symposium on Mathematical Foundations of Computer Science,
  {MFCS} 2018, August 27-31, 2018, Liverpool, {UK}}, volume 117 of {\em
  LIPIcs}, pages 74:1--74:16. Schloss Dagstuhl - Leibniz-Zentrum f{\"{u}}r
  Informatik, 2018.

\bibitem{kmm12}
Christian Konrad, Fr{\'{e}}d{\'{e}}ric Magniez, and Claire Mathieu.
\newblock Maximum matching in semi-streaming with few passes.
\newblock In Anupam Gupta, Klaus Jansen, Jos{\'{e}} D.~P. Rolim, and Rocco~A.
  Servedio, editors, {\em Approximation, Randomization, and Combinatorial
  Optimization. Algorithms and Techniques - 15th International Workshop,
  {APPROX} 2012, and 16th International Workshop, {RANDOM} 2012, Cambridge, MA,
  USA, August 15-17, 2012. Proceedings}, volume 7408 of {\em Lecture Notes in
  Computer Science}, pages 231--242. Springer, 2012.

\bibitem{y79}
Andrew Chi-Chih Yao.
\newblock Some complexity questions related to distributive
  computing(preliminary report).
\newblock In {\em Proceedings of the Eleventh Annual ACM Symposium on Theory of
  Computing}, STOC'79, pages 209--213, New York, NY, USA, 1979. Association for
  Computing Machinery.

\end{thebibliography}

\end{document}